\documentclass {amsart}
\usepackage [utf8] {inputenc}
\usepackage {amsfonts}
\usepackage {amssymb}
\usepackage {amsmath}
 \usepackage {graphicx}

\newcommand{\PP}{\mathcal{P}}
\newcommand{\N}{\mathcal{N}}

\newtheorem{dfn}{Definition}
\newtheorem{prop}[dfn]{Proposition}
\newtheorem{theo}[dfn]{Theorem}
\newtheorem{cor}[dfn]{Corollary}

\newtheorem{rem}[dfn]{Remark}

\newtheorem{prob}[dfn]{Problem}

\newcommand{\delete}[1]{}

\title {Vertex nim played on graphs}

\author[E. Duch\^ene]{Eric Duch\^ene}
\author[G. Renault]{Gabriel Renault}

\address[E. Duch\^ene]{\newline
Universit\'e de Lyon, CNRS\newline
Universit\'e Lyon 1, LIRIS, UMR5205, F-69622, France\newline
{\tt eric.duchene@univ-lyon1.fr}
}

\address[G. Renault]{\newline
Universit\'e Bordeaux 1, LaBRI, France\newline
{\tt gabriel.renault@labri.fr}}     

\begin{document}

\maketitle

\begin{abstract}
Given a graph $G$ with positive integer weights on the vertices, and a token placed on some current vertex $u$, two players alternately remove a positive integer weight from $u$ and then move the token to a new current vertex adjacent to $u$. When the weight of a vertex is set to $0$, it is removed and its neighborhood becomes a clique. The player making the last move wins. This adaptation of Nim on graphs is called {\sc Vertexnim}, and slightly differs from the game {\sc Vertex NimG} introduced by Stockman in 2004. {\sc Vertexnim} can be played on both directed or undirected graphs. In this paper, we study the complexity of deciding whether a given game position of {\sc Vertexnim} is winning for the first or second player. In particular, we show that for undirected graphs, this problem can be solved in quadratic time. Our algorithm is also available for the game {\sc Vertex NimG}, thus improving Stockman's exptime algorithm. In the directed case, we are able to compute the winning strategy in polynomial time for several instances, including circuits or digraphs with self loops.
\end{abstract}
~\\
{\bf Keywords:} Combinatorial games; Nim; graph theory

\section{Background and definitions}

We assume that the reader has some knowledge in combinatorial game theory. Basic definitions can be found in \cite{Win}. We only remind that a $\PP$ position denotes a position from which the second player has a winning strategy, while an $\N$ position means that the first player to move can win.
Graph theoretical notions used in this paper will be standard and according to \cite{BM76}. In particular, given a graph $G=(V,E)$ and a vertex $v$ of $V$, we set $N(v)=\{w\in V:(v,w)\in E\}$.  \\

The original idea of this work is the study of a variant of Nim, called {\sc Adjacent Nim}, in which both players are forced to play on the heaps in a specific cyclic order: given $N$ heaps of tokens of respective sizes $(n_1,\ldots,n_N)$, play the game of Nim under the constrainst that if your opponent has moved on heap $i$, you must move on heap $i+1$ (or on the smallest next non-empty heap, in a circular way). Actually, our investigations led us to consider {\sc Adjacent Nim} as a particular instance of the game {\sc NimG} (for "Nim on Graphs") introduced by Stockman in \cite{Sto}. \\

As a brief story of the game, we remind the reader that the game of Nim was introduced and solved by Bouton in 1904 \cite{Nim}. Since then, lots of variations were considered in the literature, the most famous one being Wythoff's game \cite{Wyt,Fra}. One can also mention \cite{DG,Fra3,Rat,Ral,Fra2} as a non-exhaustive list. One of the most recent variant of Nim provides a topology to the heaps, which are organized as the edges of an undirected graph. This game was proposed by Fukuyama in 2003 \cite{fuku,fuku2}. More precisely, an instance of its game is an undirected graph $G=(V,E)$ with an integer weight function on $E$. A token is set on an arbitrary vertex. Then two players alternately move the token along a positive adjacent edge $e$ and decrease the label of $e$ to any strictly smaller non-negative integer. The first player unable to move loses the game (this happens when the token has all its adjacent edges with a label equal to zero). In his papers, Fukuyama gives necessary and sufficient conditions for a position on a bipartite graph to be $\PP$. He also computes the Grundy values of this game for some specific families of bipartite graphs, including trees, paths or cycles. In \cite{Erik}, a larger set of graphs is investigated (including complete graphs), but only for the weight fonction $f:E\mapsto \{1\}$.  \\

In 2004, Stockman considered another generalization of Nim on graphs that she called {\sc Vertex NimG}. The main difference with Fukuyama's work is that the Nim heaps are embedded into the vertices of a graph. This definition raises a natural question when playing the game: does the player first remove some weight from a vertex and then move to another one, or does he first move to a vertex and then remove weight from it? 
\begin{itemize}
\item The variant {\it Move then remove} of {\sc Vertex NimG} was recently investigated by Burke \& George in \cite{BG}. They showed that in the case where each vertex of the input graph $G$ has a self loop, then this game is PSPACE-hard. To the best of our knowledge, nothing was proved in the general case yet.
\item The variant {\it Remove then move} of {\sc Vertex NimG} is the one that was considerd by Stockman in \cite{Sto}. In the case where the weight function is bounded by a constant, she gave a polynomial time algorithm to decide whether a given position is $\PP$ or $\N$. The same algorithm can be applied in the general case, but becomes exponential according to the order of $G$.
\end{itemize}
~\\
In Fukuyama's or Stockman's definitions, the game ends when the player is blocked because of a null weight. This means that unlike the original game of Nim, their variants may end with remaining weight on the graph. To be closer to the original Nim, we have defined the rules of our variant of {\sc Vertex NimG} in such a way that the game ends only when all the weight is removed from the graph. This variant was introduced on both directed and undirected graphs with possible loops, and under the {\it remove then move} convention. Multiple edges are not considered, since the weight is set on the vertices. We start by giving the definition of our game on undirected graphs, which is called {\sc Undirected vertexnim}.

\begin{dfn}
\label{undireted}
{\sc Undirected vertexnim}. Let $G=(V,E)$ be an undirected connected graph, let $w:V\rightarrow \mathbb{N}_{>0}$ be a function which assigns to each vertex a positive integer. Let $u\in V$ be a starting current vertex. In this game, two players alternately decrease the value of the current vertex $u$ and choose an adjacent vertex of $u$ as the new current vertex. When the value $w(v)$ of a vertex $v$ is set to $0$, then $v$ and its incident edges are removed from $G$, the subgraph $N(v)$ of $G$ becomes a clique, and a loop is added on each vertex of $N(v)$. The game ends when $G$ is empty. The player who makes the last move wins the game. 
\end{dfn}

In this definition, we make $N(v)$ become a clique after $v$ reaches zero to prevent the graph to be disconnected. In other words, we can say that in order to choose the next current vertex, it suffices to follow any path of zero vertices ending on a non zero vertex. We also add loops to prevent a player to be blocked on a vertex. The example below shows an execution of the game, the current vertex being the one with the triangle.

\begin{figure}[!ht]
	  \label{fig:exemple1}
	  \centering
		  \includegraphics[width=90mm]{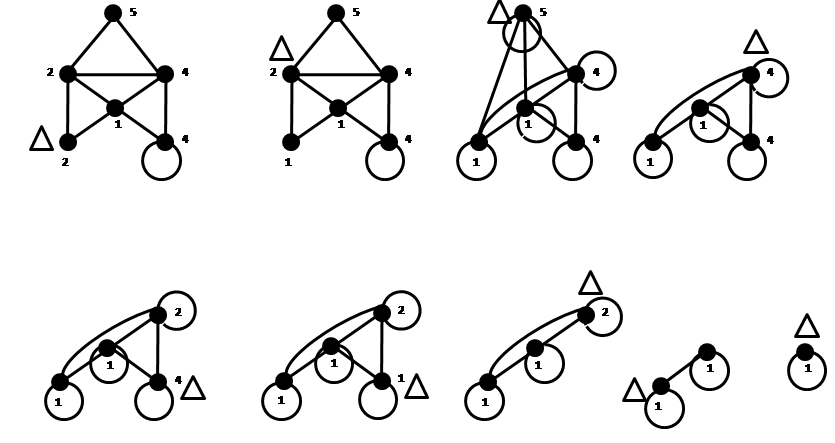}
	  \caption{Playing {\sc undirected vertexnim}}
  \end{figure}

This game can naturally be extended to directed graphs, with some constraints ensuring that all the weight is removed in the end. In particular, arcs are added when the weight of a vertex goes to zero (by the same way that a clique is build in the undirected case). We also need to play on a strong connected digraph, to avoid to be blocked on a vertex having a null outdegree. Recall that in a {\it strong connected digraph}, for every couple of vertices $(u,v)$ there exists a path from $u$ to $v$. This directed variant will be called {\sc Directed vertexnim}. 
 
\begin{dfn}
\label{directed}
{\sc Directed vertexnim}. Let $G=(V,E)$ be a strong connected digraph, and let $w:V\rightarrow \mathbb{N}_{>0}$ be a function which assigns to each vertex a positive integer. Let $u\in V$ be the starting current vertex. In this game, two players alternately decrease the value of the current vertex $u$ and choose an adjacent vertex of $u$ as the new current vertex. When the value of a vertex $v$ is set to $0$, then $v$ is removed from $G$ and all the pairs of arcs $(p,v)$ and $(v,s)$ (with $p$ and $s$ not necessarily distinct) are replaced by an arc $(p,s)$. The game ends when $G$ is empty. The player who made the last move wins the game. 
\end{dfn}

Note that in Definition \ref{directed}, the strong connectivity of $G$ is preserved when deleting a vertex. Hence it is always possible to play whenever $G$ is not empty. Figure \ref{fig:ex_directed} illustrates a sequence of moves of {\sc Directed vertexnim}, where both players remove all the weight of the current vertex at their turn.

\begin{figure}[!ht]
	  \label{fig:ex_directed}
	  \centering
		  \includegraphics[width=90mm]{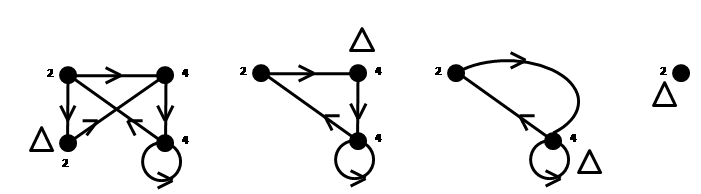}
	  \caption{Playing {\sc directed vertexnim}}
  \end{figure}
  
The current paper deals with the complexity of both versions of {\sc vertexnim}, in the sense of Fraenkel \cite{Fra02}. In particular, we will prove the tractability of the game, which implies to show that the outcome ($\PP$ or $\N$) of a game position can be computed in polynomial time. In Section 2, we will solve the game {\sc Adjacent Nim}, which is actually an instance of {\sc directed vertexnim} on circuits. Section 3 will be devoted to the resolution of {\sc directed vertexnim} for any strong connected digraph having a loop on each vertex. Section 4 concerns {\sc undirected vertexnim}, whose tractability is proved in the general case. As a corollary, we will show that our algorithm also solves Stockman's {\sc Vertex NimG} in quadratic time, improving the results presented in \cite{Sto}. In Section 5, we finally mention how our results can be adapted to mis\`ere versions of {\sc vertexnim}.

\section{Adjacent Nim}

As explained in the introduction, this game was the original motivation of our work. With the above formalism, it can be expressed as an instance of {\sc directed vertexim} on an elementary circuit $C_N=(v_1,v_2,\ldots,v_N)$ with the orientation $(v_i,v_{i+1}): 1\le i< N$ and $(v_N,v_1)$ of the arcs. In Theorem \ref{adj}, we fully solve {\sc adjacent Nim} in the case where all the weights are strictly greater than $1$. Without loss of generality, we will assume that the starting position is always $v_1$. 

\begin{theo}\label{adj}
Let $(C_N,w,v_1):N\geq 3$ be an instance of {\sc adjacent Nim} with $w:V\rightarrow \mathbb{N}_{>1}$. 
\begin{itemize}
\item If $N$ is odd, then $(C_N,w,v_1)$ is an $\N$ position. 
\item If $N$ is even, then $(C_N,w,v_1)$ is an $\N$ position iff $\min\{\underset{1\leq i \leq N}{\operatorname{argmin}}\;w(v_i)\}$ is even. 
\end{itemize}
\end{theo}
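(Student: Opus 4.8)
The plan is to prove, by induction on the number of vertices $N$ and, for fixed $N$, on the total weight $\sum_i w(v_i)$, a sharper statement that gives the outcome of {\sc adjacent Nim} on an \emph{arbitrary} positively weighted circuit with the token on an arbitrary vertex; Theorem~\ref{adj} is then the restriction of that statement to weights $>1$. After a cyclic relabelling one may always take the token to be on $v_1$. Two elementary remarks drive the analysis. From $v_1$ the only moves are: \emph{kill} $v_1$ (set $w(v_1)=0$), which leaves the game on $C_{N-1}=(v_2,\dots,v_N)$ with the token on $v_2$, and is controlled by the induction on $N$; or lower $w(v_1)$ to some $k\in\{1,\dots,w(v_1)-1\}$, which leaves the game on the same circuit $C_N$ with the token advanced to $v_2$ and strictly smaller total weight. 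Moreover a vertex of weight $1$ leaves its holder no choice but to kill it, and $C_1$ is always an $\N$-position (take everything).

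For $N$ odd I expect the first player to win as follows. Let $v_k$ be the lightest vertex among $v_2,\dots,v_N$ of smallest index $k$. If $k$ is even, killing $v_1$ leaves the second player on $C_{N-1}$, a circuit of even order whose lightest least-index vertex sits in position $k-1$ (odd); by the even case of the statement this is a $\PP$-position, so the first player wins. If $k$ is odd, killing $v_1$ does not help and any move lowering $w(v_1)$ to a value $\ge 2$ returns an odd-order circuit with all weights $\ge 2$ and the second player to move, which by induction is an $\N$-position — so the first player must instead lower $w(v_1)$ to exactly $1$. The resulting position has the token on $v_2$, all of $v_2,\dots,v_N$ still of weight $\ge 2$, and a single vertex of weight $1$ just behind the token; one must show this is a $\PP$-position for the second player (for $N=3$ this is precisely the fact that $C_3$ with weights $(b,c,1)$, $b,c\ge 2$, is a $\PP$-position). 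Positions with $N$ odd that already contain vertices of weight $1$ are absorbed into the same induction.

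For $N$ even, set $m=\min_i w(v_i)$ and $j=\min\{\operatorname{argmin}_i w(v_i)\}$. The guiding idea is that voluntarily killing a vertex before one is \emph{forced} to do so is a blunder: it turns $C_N$ into a circuit of odd order handed to the opponent, who then wins by the odd case. Hence under best play the first vertex actually removed is removed by a forced move; since $v_j$ is lightest (and, in a circuit of even order, $v_j$ is always visited by the same player, who can only delay by playing slowly), this forced removal is of (a copy of) $v_j$, after $v_j$ has been brought down to weight $1$. Because $N$ is even, one full lap does not change whose turn it is at a fixed vertex, so the player compelled to kill $v_j$ is the first player exactly when $j$ is odd. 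Thus if $j$ is even the second player must kill $v_j$ and hands the first player an odd-order circuit with the first player to move (a win), while if $j$ is odd the position is $\PP$.

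The real obstacle is turning the phrase ``hands the opponent an odd-order circuit'' into an actual win once weights have been allowed to drop to $1$: an odd-order circuit is \emph{not} automatically an $\N$-position in that generality (already on $C_3$ a vertex of weight $1$ can create a $\PP$-position), so Theorem~\ref{adj} cannot be quoted verbatim inside the induction. This is why the induction must carry the full $\PP/\N$-description of {\sc adjacent Nim} on every positively weighted circuit, and the bulk of the work is: (i) pinning down that description — which refines the clean dichotomy of Theorem~\ref{adj} by recording where the vertices of weight $1$ sit relative to the token — and (ii) verifying the two closure properties (from an $\N$-position some move reaches a $\PP$-position; from a $\PP$-position every move reaches an $\N$-position), with the moves ``kill $v_1$'' and ``lower $w(v_1)$ to $1$'' doing the decisive work, and with the delicate point being to rule out the opponent's attempts to manipulate the intermediate weights of the soon-to-be-smaller circuit into one of the exceptional configurations.
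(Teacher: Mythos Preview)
Your proposal is a plan rather than a proof, and the plan hinges on an ingredient you never supply: the complete $\PP/\N$ classification of {\sc adjacent Nim} on circuits allowing vertices of weight~$1$. You say explicitly that ``the induction must carry the full $\PP/\N$-description of {\sc adjacent Nim} on every positively weighted circuit'' and that ``the bulk of the work is (i) pinning down that description'', but you never state what that description is, let alone verify it. This is not a detail that can be filled in routinely: the paper itself records this very question as an open problem immediately after the theorem (positions with some $w(v_i)=1$ are left unresolved). Concretely, in your odd-$N$ case with $k$ odd you reduce to showing that $C_N$ with weights $(1,b_2,\dots,b_N)$, $b_i\ge 2$, token on $v_2$, is a $\PP$-position --- and you simply assert that ``one must show'' this. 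In your even-$N$ case you reduce to an odd circuit that may contain several weight-$1$ vertices and then observe, correctly, that Theorem~\ref{adj} does not apply there; but you offer no substitute.

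The paper's proof avoids this trap entirely by giving explicit strategies that never require knowing the outcome of a general weight-$1$ position. For odd $N$ the first player plays $w(v_1)\to 1$ and then \emph{mirrors}: whenever the opponent empties $v_{2i}$, empty $v_{2i+1}$; otherwise set $w(v_{2i+1})\to 1$; with a tailored rule at $v_{N-1},v_N$ that forces the opponent to empty $v_1$. After one lap an even number of vertices has been removed, the survivors at odd indices all have weight~$1$, and a second lap of ``empty everything'' finishes the game with the first player moving last. For even $N$ the player not holding the first minimum simply plays $w\to w-1$ each turn; this forces the opponent to be the first to produce a weight $0$ or $1$, and either branch hands back exactly the odd-$N$ situation already analysed (with all weights $\ge 2$, or with a single trailing $1$ matching the position after the first move of the odd-case strategy). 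The point is that the paper controls precisely \emph{which} weight-$1$ configurations arise and handles only those, whereas your inductive scheme needs the answer for all of them.
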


\begin{proof}
$\bullet$ If $N$ is odd, then the first player can apply the following strategy to win: first play $w(v_1)\rightarrow 1$. Then for all $1\leq i<(N-1)/2$: if the second player empties $v_{2i}$, then the first player also empties the following vertex $v_{2i+1}$. Otherwise play $w(v_{2i+1})\rightarrow 1$. The strategy is different for the last two vertices of $C_N$: if the second player empties $v_{N-1}$, then play $w(v_N)\rightarrow 1$, otherwise play $w(v_N)\rightarrow 0$. As $w(v_1)=1$, the second player is now forced to empty $v_1$. Since an even number of vertices was deleted since then, we still have an odd circuit to play on. It now suffices for the first player to empty all the vertices on the second run. Indeed, the second player is also forced to set each weight to $0$ since he has to play on vertices satisfying $w=1$. Since the circuit is odd, the first player is guaranteed to make the last move on $v_N$ or $v_{N-1}$.\\
$\bullet$ If $N$ is even, we claim that who must play the first vertex of minimum weight will lose the game. The winning strategy of the other player consists in decreasing by $1$ the weight of each vertex at his turn. Without loss of generality, assume that $\min\{\underset{1\leq i \leq N}{\operatorname{argmin}}\;w(v_i)\}$ is odd. If the strategy of the second player always consists in moving $w(v_i)\rightarrow w(v_i)-1$, then the first player will be the first to set a weight to $0$ or $1$. If he sets a vertex to $0$, then the second player now faces an instance $(C'_{N-1},w')$ with $w':V'\rightarrow \mathbb{N}_{>1}$, which is winning according to the previous item. If he sets a vertex to $1$, then the second player will empty the following vertex, leaving to the first player a position $(C'_{N-1}=(v'_1,v'_2,\ldots,v'_{N-1}),w')$ with $w':V'\rightarrow \mathbb{N}_{>1}$ except on $w'(v'_{N-1})=1$. This position corresponds to the one of the previous item after the first move, and is thus losing.
\end{proof}

\begin{prob}
The question of deciding whether a given position is $\PP$ or $\N$ remains open in the cases where some vertices have a weight equal to $1$. Indeed the previous strategy cannot be applied anymore, and we did not manage to get satisfying results when the $1's$ are owned by different players.
\end{prob}

\begin{rem}
What if we adapt Stockman's {\sc Vertex NimG} to directed graphs ? Recall that it means that vertices of null weight are never removed, and a player who must play from a $0$ loses. In the case of circuits, it is easy to see that Theorem \ref{adj} remains true, even if there are vertices of weight $1$. On a general graph, we conjecture that this game should be at least as hard as the game {\sc Geography} \cite{FrSi} (nevertheless the reduction needs to be done).
\end{rem}

\section{Directed graphs with all loops}

Dealing with {\sc directed vertexnim} on any strong connected digraph is much harder. We managed to decide whether a position is $\PP$ or $\N$ only in the case where there is a loop on each vertex. This is somehow a way to consider {\sc NimG} with the extended neighborhood, as proposed in \cite{BG}. 

\begin{theo}
\label{thm:dg}
Let $(G,w,u)$ be an instance of {\sc directed vertexnim} where $G$ is strongly connected with a loop on each vertex.
Deciding whether (G,w,u) is $\PP$ or $\N$ can be done in time $O(|V(G)||E(G)|)$.
\end{theo}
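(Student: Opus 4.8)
The plan is to show that the outcome of $(G,w,u)$ depends on the weights only through their parities, plus the parity of the starting vertex's own weight, and that this dependence can be computed by a greedy/inductive argument over the vertices. Since every vertex has a loop, the token is never forced to leave its current vertex: a player who moves from $u$ may decrease $w(u)$ and either stay at $u$ (via the loop) or move to a neighbour. This extra freedom is what makes the game tractable — intuitively, whoever "controls" the last nonzero vertex can dictate the endgame by camping on it via the loop.

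First I would record the basic endgame: if exactly one vertex $v$ remains with weight $k$, the position is $\N$ if and only if $k$ is odd (the mover just repeatedly subtracts $1$, looping on $v$, and makes the last move). Next I would analyse what happens on a general $(G,w,u)$ by a "who is forced to open a new vertex" argument analogous to the even case of Theorem \ref{adj}: a natural candidate strategy for a player is to always subtract $1$ and loop in place, so that the opponent is the first one compelled to reduce some weight to $0$, after which one vertex disappears and the neighbourhood is completed into a clique (which, crucially, preserves "strongly connected with a loop on each vertex", so the induction hypothesis applies to the smaller graph). The key structural claim I expect to prove is that $(G,w,u)$ is $\N$ if and only if one of a small number of parity conditions holds — e.g., $w(u)$ is even, or $w(u)$ is odd but the minimum weight over all vertices is attained at a vertex different from $u$ with a suitable parity of the number of "safe" moves before it — mirroring the $\operatorname{argmin}$ criterion of Theorem \ref{adj} but now using the loop to avoid the rigid cyclic order. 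Essentially the loop lets the active player choose \emph{when} to abandon the current vertex, so the relevant quantity becomes the parity of $\lfloor (w(u)-1)/1\rfloor$ interacting with the minimum of the other weights and the parity structure of the remaining graph.

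Once the combinatorial characterisation is in hand, the algorithmic part is routine: I would argue that evaluating the parity condition requires, at worst, iterating the vertex-deletion process $O(|V(G)|)$ times — each deletion reduces the vertex set by one and triggers a clique-completion on $N(v)$, which touches $O(|E(G)|)$ edges — giving the stated $O(|V(G)|\,|E(G)|)$ bound. I would be careful to note that the clique-completion can only add edges, so the total work per phase is genuinely bounded by the final (completed) edge count; and that testing the parity criterion at each stage is $O(|V(G)|)$ work (finding the minimum weight and its location), which is dominated by the edge bound.

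The main obstacle I anticipate is pinning down the exact parity invariant in the case $w(u)$ odd: the "subtract $1$ and loop" strategy is clean when your own weight has the favourable parity, but when both players would like to camp, one must determine who is forced to blink first, and this depends not just on $w(u)$ but on the second-smallest weight, on whether the minimiser coincides with $u$, and on the parity of the graph that survives the first deletion. Getting this bookkeeping right — and verifying that the clique-completion never hands the opponent an unexpected escape route (it cannot remove the loops, which is the whole point) — is where the real work lies; the rest is an induction on $|V(G)|$ with the single-vertex case as base.
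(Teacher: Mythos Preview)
Your proposal misses the central observation that collapses the problem. When $w(u)\geq 2$, the position is \emph{always} $\N$, regardless of any parities: if reducing $u$ to $0$ and moving to some successor wins, do it; otherwise every position $(G\setminus\{u\},w,v)$ reached that way is $\N$ for its mover, so instead reduce $u$ to $1$ and loop --- the opponent is then forced to set $u$ to $0$ and hand you precisely one of those $\N$ positions. Consequently the actual weight values are irrelevant beyond the dichotomy ``$w=1$'' versus ``$w\geq 2$''. There is no parity-of-weights invariant and no $\operatorname{argmin}$-style criterion of the kind you are hunting for; the whole second half of your plan is chasing structure that is not there.

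The real work, which your outline does not anticipate, is the case $w(u)=1$. Here the only move is to delete $u$ and step to a successor, so the outcome is governed entirely by the induced subgraph $G'$ on $\{v:w(v)=1\}$: as soon as the token leaves $G'$ it lands on a weight-$\geq 2$ vertex, which is $\N$ by the above, so the question becomes who is forced to exit $G'$. The paper resolves this with a labeling $lo_{G'}$ built by repeatedly peeling off a strongly connected component $S$ of $G'$ with no out-arcs, labeling $S$ by the parity of $|S|$, then (when $|S|$ is odd) labeling as $\N$ the vertices with an arc into $S$, and recursing on what remains. Your $O(|V||E|)$ accounting is in the right spirit for this peeling, but the object being iterated over is the SCC structure of the weight-$1$ subgraph, not a sequence of clique-completions driven by weight parities.
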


The proof of this theorem requires several definitions that we present here.

\begin{dfn}
\label{def:thmdir}
Let $G=(V,E)$ be a directed graph.
We define a labeling $lo_G:V(G)\rightarrow\{\PP,\N\}$ as follows : \\
Let $S \subseteq V(G)$ be a non-empty set of vertices such that the graph induced by $S$ is strongly connected and $\forall u \in S, \forall v \in (V(G) \backslash S)$, $(u,v) \notin E(G)$. \\
Let $T = \{v \in V(G) \backslash S \mid \exists u \in S, (v,u) \in E(G)\}$. \\
Let $G_e$ be the graph induced by $V(G) \backslash S$ and $G_o$ the graph induced by $V(G) \backslash (S \cup T)$. \\
If $|S|$ is even, $\forall u \in S$, $lo_G(u) = \N$, and $\forall v \in G \backslash S$, $lo_G(v) = lo_{G_e}(v)$. \\
If $|S|$ is odd, $\forall u \in S$, $lo_G(u) = \PP$, $\forall v \in T$, $lo_G(v) = \N$ and $\forall w \in G \backslash (S \cup T)$, $lo_G(w) = lo_{G_o}(w)$.
\end{dfn}
When decomposing the graph into strongly connected components, $S$ is one of those with no out-arc.
The choice of $S$ is not unique, unlike the $lo_{G}$ function: if $S_1$ and $S_2$ are both strongly connected components without out-arcs, the one which is not chosen as first will remain a strongly connected component after the removal of the other, and as it has no out-arc, none of its vertices will be in the $T$ set.

\begin{proof}
Let $G'$ be the induced subgraph of $G$ such that $V(G') = \{v \in V(G) \mid w(v)=1\}$. \\
If $G=G'$, then $(G,w,u)$ is an $\N$ position if and only if $|V(G)|$ is odd since the problem reduces to \textquotedblleft She loves move, she loves me not\textquotedblright. We will now suppose that $G \neq G'$, and consider two cases about $w(u)$: \\
~\textbullet~ Assume $w(u) \geqslant 2$. 
If there is a winning move which reduces $u$ to $0$, then we can play it and win.
Otherwise, reducing $u$ to $1$ and staying on $u$ is a winning move.
Hence $(G,w,u)$ is an $\N$ position. \\
~\textbullet~ Assume $w(u) = 1$, i.e., $u\in G'$.
According to Definition \ref{def:thmdir}, computing $lo_{G'}$ yields a sequence of couples of sets $(S_i,T_i)$ (which is not unique). Note that some $T_i$ may be empty (this happens when the corresponding $S_i$ has an even size). Thus the following assertions hold: if $u \in S_i$ for some $i$, then any direct successor $v$ of $u$ is in a set $S_j$ or $T_k$ with $j \leqslant i$ and $k < i$, and if $u \in T_i\neq \emptyset$ for some $i$, then there exists a direct successor $v$ of $u$ in the set $S_i$, with $lo_{G'}(v)= \PP$. \\
Our goal is to show that $(G,w,u)$ is an $\N$ position if and only if $lo_{G'}(u) = \N$ by induction on $|V(G')|$.
If $|V(G')| = 1$, then $V(G') = \{u\}$ and $lo_{G'}(u) = \PP$. Hence we are forced to reduce $u$ to $0$ and go to a vertex $v$ such that $w(v) \geqslant 2$, which we previously proved to be a losing move.
Assume $|V(G')| \geqslant 2$.
First, note that when one reduce the weight of a vertex $v$ to $0$, the replacement of the arcs makes the strongly connected components remain the same (except the component containing $v$ of course, which loses one vertex).
Consequently, if $u \in S_i$ for some $i$, then for any vertex $v \in \cup_{l=1}^{i-1} (T_l \cup S_l)$, $lo_{G'\backslash \{u\}}(v) = lo_{G'}(v)$ and for any vertex $w \in S_i \backslash \{u\}$, $lo_{G'\backslash \{u\}}(w) \neq lo_{G'}(w)$. If $u \in T_i$ for some $i$, then for any vertex $v \in (\cup_{l=1}^{i-1} (T_l \cup S_l)) \cup S_i$, $lo_{G'\backslash \{u\}}(v) = lo_{G'}(v)$. \\
We now consider two cases about $u$: first assume that $lo_{G'}(u) = \PP$, with $u \in S_i$ for some $i$.
We reduce $u$ to $0$ and we are forced to move to a direct successor $v$.
If $w(v) \geqslant 2$, we previously proved this is a losing move.
If $v \in \cup_{l=1}^{i-1} (T_l \cup S_l)$, then $lo_{G'\backslash \{u\}}(v) = lo_{G'}(v) = \N$ and it is a losing move by induction hypothesis.
If $v \in S_i$, then $lo_{G'\backslash \{u\}}(v) \neq lo_{G'}(v) = \PP$ and it is a losing move by induction hypothesis. \\
Now assume that $lo_{G'}(u) = \N$.
If $u \in T_i$ for some $i$, we can reduce $u$ to $0$ and move to a vertex $v \in S_i$, which is a winning move by induction hypothesis.
If $u \in S_i$ for some $i$, it means that $|S_i|$ is even, we can reduce $u$ to $0$ and move to a vertex $v \in S_i$, with $lo_{G' \backslash \{u\}}(v) \neq lo_{G'}(v) = \N$. This is a winning move by induction hypothesis.
Hence, $(G,w,u)$ is an $\N$ position if and only if $lo_{G'}(u) = \N$.
Figure \ref{fig:dir3} illustrates the computation of the $lo$ function.
\end{proof}

\begin{figure}[!ht]
\centering
	  \label{fig:dir3}
		  \includegraphics[scale=.5]{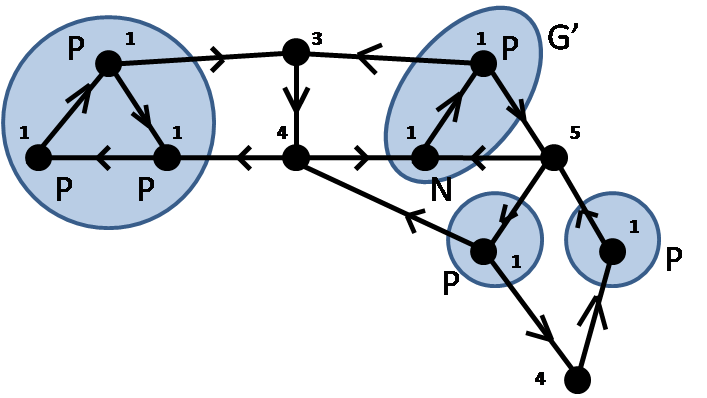}
	  \caption{Exemple of $lo$ labeling function}
\end{figure}

\begin{prob}
Can one provide a characterization of the $\PP$ and $\N$ positions in the general case where self loops are optional? \\
Note that one of the reasons for which we have slightly changed Stockman's rules is that we assumed that our current game had a lower complexity than {\sc Vertex NimG} or {\sc Geography} \cite{FrSi} on directed graphs. The previous theorem shows that our assumption was somehow true, since we remind the reader that {\sc Vertex NimG} was proved to be PSPACE-hard with all loops (for the {\it move then remove} convention \cite{BG}, the other convention being trivial with all loops). 
\end{prob}

\section{Undirected graphs}

In the undirected case, it is easy to show that if each vertex has a self loop, deciding whether a position is $\PP$ or not only depends on the size of the subset $\{v \in V \mid w(v)=1\}$. Remark that this game can be solved by Theorem \ref{thm:dg}, by saying that it suffices to replace each edge $(u,v)$ by two arcs $(u,v)$ and $(v,u)$. Yet, the following proposition improves the complexity of the method, which becomes linear.

\begin{prop}
Let $(G=(V,E),w,u)$ be an instance of {\sc undirected vertexnim} such that there is a loop on each vertex of $G$.
Deciding whether $(G,w,u)$ is $\PP$ or $\N$ can be done in time $O(|V|)$.
\end{prop}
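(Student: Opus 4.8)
The plan is to reuse Theorem~\ref{thm:dg}. As pointed out just before the statement, an instance of \textsc{undirected vertexnim} in which every vertex carries a loop is equivalent to the \textsc{directed vertexnim} instance on the digraph $\vec G$ obtained by turning each edge into a pair of opposite arcs and keeping all loops: $\vec G$ is strongly connected, every vertex of $\vec G$ has a loop, and the ``zeroing'' operation on $\vec G$ (replacing arcs $(p,v),(v,s)$ by $(p,s)$) is exactly the undirected operation of making $N(v)$ a clique with new loops. Hence Theorem~\ref{thm:dg} and Definition~\ref{def:thmdir} apply to $\vec G$ directly, and the whole task is to show that, because $\vec G$ is \emph{bidirected}, the computation prescribed by that theorem collapses to something readable in linear time. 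Write $G'$ for the subgraph of $G$ induced by the weight-one vertices, as in the proof of Theorem~\ref{thm:dg}. The easy cases are immediate: if $G=G'$ the position is an $\N$ position iff $|V|$ is odd, and if $G\neq G'$ and $w(u)\ge 2$ the position is an $\N$ position; each of these is decided in $O(|V|)$ after one pass over the weights.

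The work is in the case $w(u)=1$ and $G\neq G'$, where Theorem~\ref{thm:dg} reduces the question to computing $lo_{\vec{G'}}(u)$ for the labelling $lo$ of Definition~\ref{def:thmdir}. The observation to make is that in a bidirected digraph the strongly connected components coincide with the connected components, and a connected component has no out-arc whatsoever; therefore, in Definition~\ref{def:thmdir} one may always take $S$ to be a connected component of $\vec{G'}$, and then the associated set $T$ is empty. Peeling the components off one after another, the recursion of Definition~\ref{def:thmdir} then simply gives $lo_{\vec{G'}}(v)=\N$ precisely when the connected component of $v$ in $G'$ has even size. Consequently, when $w(u)=1$ and $G\neq G'$, the position $(G,w,u)$ is an $\N$ position if and only if the connected component of $u$ in $G'$ contains an even number of vertices. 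Assembling the three cases, to decide the outcome it suffices to read the weights (determining $G'$ and whether $G=G'$ and whether $w(u)=1$) and then, in the one remaining case, run a single linear-time graph search from $u$ inside $G'$ to recover the parity of its component; altogether this gives the claimed $O(|V|)$ bound.

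The step I expect to need the most care is the reduction of Definition~\ref{def:thmdir} in the bidirected case: one has to verify carefully that $T$ is forced to be empty for every admissible choice of $S$, and that the order in which the connected components of $\vec{G'}$ are removed does not affect the resulting labels (this is the non-uniqueness remark already made after Definition~\ref{def:thmdir}, specialised to bidirected graphs), so that the component-parity formula for $lo_{\vec{G'}}$ is well defined. As a self-contained alternative to invoking Theorem~\ref{thm:dg}, one could prove the above characterisation by induction on the total weight $\sum_{v\in V}w(v)$: when $w(u)\ge 2$ the mover wins by lowering $w(u)$ to $1$ and staying on $u$ or, if that lands on an $\N$ position, by lowering $w(u)$ to $0$ and stepping onto a neighbour of weight one (these two moves reach positions whose $G'$-component of the current vertex differ in size by one, hence of opposite parity, so one of them is a $\PP$ position; the cases where $u$ has no weight-one neighbour or $|V|=1$ are checked directly); and when $w(u)=1$ with $G\neq G'$ the mover is forced to delete $u$, which — since the only $G'$-component meeting $N(u)$ is that of $u$ itself, and making $N(u)$ a clique reconnects the pieces it may break into — leaves the current vertex in a $G'$-component of size one less than before, so the outcome is governed by the parity of $u$'s original $G'$-component.
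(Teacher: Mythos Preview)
Your argument is correct and lands on exactly the same characterisation as the paper: when $G\neq G'$ and $w(u)=1$, the position is $\N$ iff the connected component of $u$ in the weight-one subgraph $G'$ has even size; the other two cases ($G=G'$; $w(u)\ge 2$) are handled identically.

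The route, however, is genuinely different. The paper does \emph{not} go through Theorem~\ref{thm:dg}: it gives a short self-contained induction on $n_u$, the size of $u$'s component in $G'$, arguing directly that from a weight-one vertex the mover must delete it and step to a neighbour, which either has weight $\ge 2$ (a losing move by the previous case) or weight $1$ in a component now of size $n_u-1$. Your primary approach instead invokes Theorem~\ref{thm:dg} and then observes that on a bidirected digraph every strongly connected component of $\vec{G'}$ is already a full connected component with no out-arcs, so every $T$ in Definition~\ref{def:thmdir} is empty and the $lo$-labelling collapses to component parity. Both are clean; the paper's buys independence from the directed machinery and a slightly shorter argument, while yours makes explicit \emph{why} the $O(|V||E|)$ algorithm of Theorem~\ref{thm:dg} simplifies here --- precisely because $T=\emptyset$ throughout the recursion. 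Your sketched alternative induction (on total weight) is essentially the paper's argument recast, with the extra observation that deleting $u$ and cliquifying $N(u)$ keeps the remaining weight-one component connected; the paper leaves that connectivity implicit.

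One loose end you share with the paper: a graph search to determine the size of $u$'s component in $G'$ is $O(|V|+|E|)$, not $O(|V|)$, and neither you nor the paper explains how the edge term disappears.
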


\begin{proof}
Let $G'$ be the induced subgraph of $G$ such that $V(G') = \{v \in V(G) \mid w(v)=1\}$. \\
If $G = G'$, then $(G,w,u)$ is an $\N$ position if and only if $|V(G)|$ is odd since the problem reduces to \textquotedblleft She loves move, she loves me not\textquotedblright. In the rest of the proof, assume $G \neq G'$. \\
~\textbullet~ We first consider the case where $w(u) \geqslant 2$. 
If there is a winning move which reduces $u$ to $0$, then we play it and win.
Otherwise, reducing $u$ to $1$ and staying on $u$ is a winning move.
Hence $(G,w,u)$ is an $\N$ position. \\
~\textbullet~ Assume $w(u) = 1$.
Let $n_u$ be the number of vertices of the connected component of $G'$ which contains $u$.
We show that $(G,w,u)$ is an $\N$ position if and only if $n_u$ is even by induction on $n_u$.
If $n_u = 1$, then we are forced to reduce $u$ to $0$ and move to another vertex $v$ having $w(v) \geqslant 2$, which we previously proved to be a losing move. 
Now assume $n_u \geqslant 2$. If $n_u$ is even, we reduce $u$ to $0$ and move to an adjacent vertex $v$ with $w(v) = 1$, which is a winning move by induction hypothesis. 
If $n_u$ is odd, then we reduce $u$ to $0$ and we are forced to move to an adjacent vertex $v$. If $w(v) \geqslant 2$, then we previously proved it is a losing move. If $w(v) = 1$, this is also a losing move by induction hypothesis. Therefore in that case, $(G,w,u)$ is an $\N$ position if and only if $n_u$ is even.
\end{proof}

In the general case where the loops are optional, the tractability of the game is still guaranteed, even though the previous linear time algorithm is no more available. 

\begin{theo}
\label{thm:ug}
Let $(G,w,u)$ be an instance of {\sc undirected vertexnim}.
Deciding whether $(G,w,u)$ is $\PP$ or $\N$ can be done in $O(|V(G)||E(G)|)$ time.
\end{theo}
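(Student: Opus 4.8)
The plan is to prove correctness of a decomposition algorithm and to bound its running time, in the spirit of the two easy special cases already treated (the preceding proposition and Theorem~\ref{thm:dg}). Throughout, let $G'$ be the subgraph of $G$ induced by the weight-$1$ vertices, and call a vertex \emph{heavy} if its weight is at least $2$.

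\textbf{Reductions and easy positions.} As in the proof of the preceding proposition, if $G=G'$ then $(G,w,u)$ is $\N$ iff $|V(G)|$ is odd (the ``loves me, loves me not'' argument), and if $w(u)\geq 2$ and $u$ carries a loop then $(G,w,u)$ is $\N$: play a winning ``reduce $u$ to $0$'' move if one exists, otherwise reduce $u$ to $1$ and stay on the loop, forcing the opponent to empty $u$ and hand back a position just argued to be $\N$. I would then record the elementary position-equivalences (by induction on total weight) between positions differing only in the weights of heavy vertices. These do \emph{not} make all heavy weights equivalent — on a single edge, $(3,2)$ is $\N$ from the heavy end while $(2,2)$ is $\PP$ — but they show the weight of a heavy vertex only enters through the ``reduce $u$ to a smaller value still $\geq 2$ and step away'' moves, which is what keeps the analysis finite.

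\textbf{The light case.} Suppose $w(u)=1$ and $G\neq G'$. I would show by induction on the size of the connected component $C$ of $u$ in $G'$ that $(G,w,u)$ is $\N$ iff $|C|$ is even. The key point is that neither player ever wants to step out of $C$: emptying the current light vertex and stepping to a heavy neighbour hands the opponent a heavy vertex that, by the clique-and-loops rule, now carries a loop, hence an $\N$ position. So $C$ is consumed one vertex at a time — each emptying removes one vertex of $C$ and, by the clique rule, keeps the rest of $C$ connected — and the player forced to move once $C$ is exhausted lands on a looped heavy vertex and wins; the parity follows, and it is read off from the component structure of $G'$ in $O(|V(G)|+|E(G)|)$ time.

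\textbf{The heavy case (the core).} Suppose $w(u)\geq 2$; the looped subcase is done above. If $u$ has a weight-$1$ neighbour then (treating separately the subcase where $u$ is the only heavy vertex, which drops us into the $G=G'$ regime) I would show $(G,w,u)$ is $\N$: among the two moves ``empty $u$ and step to a light neighbour $v$'' and ``reduce $u$ to $1$ and step to a light neighbour $v$'', one lands the opponent in a light component whose parity makes it $\PP$ by the light case (the two moves leave $v$ in components whose sizes differ by one). The remaining configuration is $u$ heavy, loop-free, all neighbours heavy. Here every ``escape'' move — reducing $u$ to $0$ or $1$, or stepping onto a heavy neighbour that has a loop or a light neighbour — hands the opponent an $\N$ position by the cases already settled, so the only possibly winning moves reduce $u$ to a value still $\geq 2$ and step onto a neighbour that is again heavy, loop-free, with only heavy neighbours. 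Hence, once the token enters the set $H$ of such vertices it never leaves $H$ and no new clique is ever created, so the remaining play coincides with an instance of Stockman's {\sc Vertex NimG} on the \emph{fixed} graph $G[H]$ with weights $w(v)-2$ (``decrease the current pile and move; whoever is stuck on an empty pile or a sink loses''). I would therefore establish, as a preliminary step, that Stockman's {\sc Vertex NimG} admits an $O(|V||E|)$ peeling decomposition (this is the corollary advertised in the introduction) and invoke it as a black box.

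\textbf{Complexity and the main obstacle.} Assembling the pieces, deciding $(G,w,u)$ costs $O(|V(G)|+|E(G)|)$ to build $G'$, its components, and the set $H$, plus one call to the {\sc Vertex NimG} solver on $G[H]$ of cost $O(|V(G)||E(G)|)$; since $G$ is connected this is $O(|V(G)||E(G)|)$ overall. The main obstacle is the heavy case: one must verify that \emph{every} escape move is losing — this is precisely what legitimises the reduction to a clique-free, fixed-graph subgame — and, independently, one must design and prove correct the quadratic-time decomposition for Stockman's game itself; by comparison the light case and the easy positions are routine adaptations of arguments already present in the paper.
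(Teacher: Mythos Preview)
Your proposal is correct and follows essentially the same case decomposition as the paper: the same trivial loop case, the same parity argument for the weight-$1$ component (your ``light case'' is the paper's Case~(2), and your $G'$ is the paper's $G_u$ minus the current vertex), the same $\N$-verdict when a heavy vertex has a light neighbour (Case~(3)), and for the core case your set $H$ is exactly the paper's $G'$, on which the paper defines the peeling labelling $lu$ (local minima get $\PP$, their neighbours $\N$, recurse) and proves its correctness by induction on $\sum_{v\in G'}w(v)$. Your one organisational twist---observing that play restricted to $H$ is literally Stockman's {\sc Vertex NimG} on the fixed graph $G[H]$ with weights $w(v)-2$, and invoking that game's solver as a black box---is a clean way to phrase it, but note that in the paper the {\sc Vertex NimG} corollary is derived \emph{from} this theorem via the same $lu$ labelling, so you must prove the peeling decomposition for {\sc Vertex NimG} directly (which is straightforward: the paper's Case~(4) induction transplants verbatim) to avoid circularity.
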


The proof of this theorem requires several definitions that we present here.

\begin{dfn}
Let $G=(V,E)$ be an undirected graph with a weight function $w:V\rightarrow \mathbb{N}_{>0}$ defined on its vertices.\\
Let $S = \{u \in V(G) \mid \forall v \in N(u), w(u) \leqslant w(v)\}$. \\
Let $T = \{v \in V(G) \backslash S \mid \exists u \in S, (v,u) \in E(G)\}$. \\
Let $G'$ be the graph induced by $G \backslash (S \cup T)$. \\
We define a labeling $lu_{G,w}$ of its vertices as follows : \\
$\forall u \in S$, $lu_{G,w}(u) = \PP$, $\forall v \in T$, $lu_{G,w}(v) = \N$ and $\forall t \in G \backslash (S \cup T)$, $lu_{G,w}(t) = lu_{G',w}(t)$.
\end{dfn}

\begin{proof}
Let $G_u$ be the induced subgraph of $G$ such that $V(G_u) = \{v \in V(G) \mid w(v)=1~or~v=u\}$, and $G'$ be the induced subgraph of $G$ such that $V(G') = \{v \in V(G) \mid w(v)\geqslant 2~and~(v,v)\notin E(G)~and~\forall t \in V(G), (v,t) \in E(G) \Rightarrow w(t) \geqslant 2 \}$. \\
If $G=G_u$ and $w(u) = 1$, then $(G,w,u)$ is an $\N$ position if and only if $|V(G)|$ is odd since it reduces to \textquotedblleft She loves move, she loves me not\textquotedblright. \\
If $G=G_u$ and $w(u) \geqslant 2$, we reduce $u$ to $0$ and move to any vertex if $|V(G)|$ is odd, and we reduce $u$ to $1$ and move to any vertex if $|V(G)|$ is even; both are winning moves, hence $(G,w,u)$ is an $\N$ position. \\
In the rest of the proof we will assume that $G \neq G_u$. In the next three cases, we consider the case $u\notin G'$. \\
~\textbullet~ {\it Case (1)} Assume $w(u) \geqslant 2$ and there is a loop on $u$.
If there is a winning move which reduces $u$ to $0$, then we can play it and win.
Otherwise, reducing $u$ to $1$ and staying on $u$ is a winning move.
Therefore $(G,w,u)$ is an $\N$ position. \\
~\textbullet~ {\it Case (2)} Assume $w(u) = 1$. \\
Let $n$ be the number of vertices of the connected component of $G_u$ which contains $u$.
We will show that $(G,w,u)$ is an $\N$ position if and only if $n$ is even by induction on $n$.
If $n = 1$, then we are forced to reduce $u$ to $0$ and move to another vertex $v$, with $w(v) \geqslant 2$, which was proved to be a losing move since it creates a loop on $v$.
Now assume $n \geqslant 2$.
If $n$ is even, we reduce $u$ to $0$ and move to a vertex $v$ satisfying $w(v) = 1$, which is a winning move by induction hypothesis (the connected component of $G_u$ containing $u$ being unchanged, except the removal of $u$).
If $n$ is odd, we reduce $u$ to $0$ and move to some vertex $v$, creating a loop on it.
If $w(v) \geqslant 2$, we already proved this is a losing move.
If $w(v) = 1$, it is a losing move by induction hypothesis.
We can therefore conclude that $(G,w,u)$ is an $\N$ position if and only if $n$ is even. Figure \ref{fig:case2} 	illustrates this case.\\
~\textbullet~ {\it Case (3)} Assume $w(u) \geqslant 2$ and there is a vertex $v$ such that $(u,v) \in E(G)$ and $w(v) = 1$.
Let $n$ be the number of vertices of the connected component of $G_u$ which contains $u$.
If $n$ is odd, we reduce $u$ to $1$ and we move to $v$, which we proved to be a winning move.
If $n$ is even, we reduce $u$ to $0$ and we move to $v$, which we also proved to be winning.
Hence $(G,w,u)$ is an $\N$ position in that case. Figure \ref{fig:case3} illustrates this case.\\
\begin{figure}[h!]
   \begin{minipage}[c]{.48\linewidth}
   \begin{center}
	\includegraphics[scale=.4]{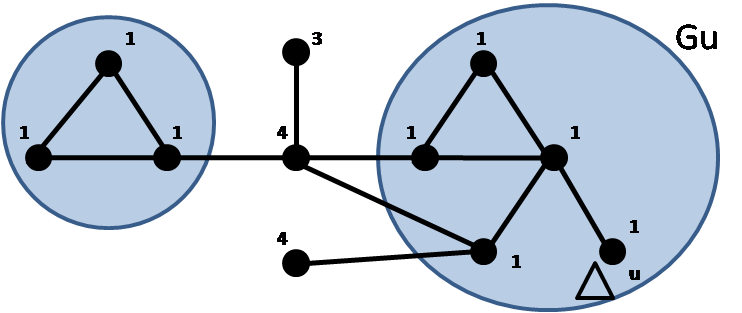}
	\caption{Case 2: the connected component containing u has an odd size: this is a $\PP$ position.}
	\label{fig:case2}
	\end{center}
   \end{minipage} \hfill
   \begin{minipage}[c]{.48\linewidth}
   \begin{center}
    \includegraphics[scale=.4]{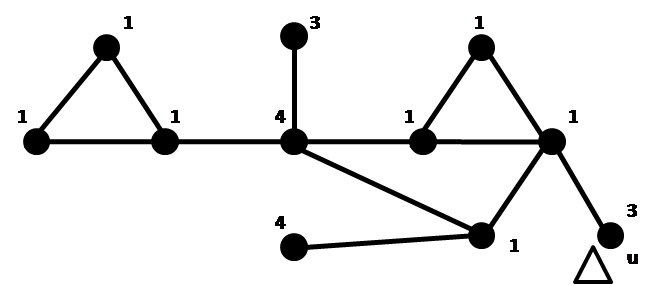}
	\label{fig:case3}
	\caption{Case 3: an $\N$ position since $u$ of weight $w(u)>1$ has a neighbor of weight $1$.}
	\end{center}
   \end{minipage}
\end{figure}
\noindent
~\textbullet~ {\it Case (4)} Assume $u \in G'$.
We will show that $(G,w,u)$ is $\N$ if and only if $lu_{G',w}(u) = \N$ by induction on $\sum_{v \in V(G')} w(v)$.
If $\sum_{v \in V(G')} w(v) = 2$, we get $G' = \{u\}$ and we are forced to play to a vertex $v$ such that $w(v) \geqslant 2$ and $v \notin V(G')$, which we proved to be a losing move.
Assume $\sum_{v \in V(G')} w(v) \geqslant 2$.
If $lu_{G',w}(u) = \N$, we reduce $u$ to $w(u)-1$ and move to a vertex $v$ of $G'$ such that $w(v) < w(u)$ and $lu_{G',w}(v) = \PP$. Such a vertex exists by definition of $lu$.
Let $(G_1,w_1,v)$ be the resulting position after such a move.
Hence $lu_{G'_1,w_1}(v) = lu_{G',w}(v) = \PP$ since the only weight that has been reduced remains greater or equal to the one of $v$. And $(G_1,w_1,v)$ is a $\PP$ position by induction hypothesis.
If $lu_{G',w}(u) = \PP$, the first player is forced to reduce $u$ and to move to some vertex $v$.
Let $(G_1,w_1,v)$ be the resulting position.
First remark that $w_1(v) \geqslant 2$ since $u\in G'$.
If he reduces $u$ to $0$, he will lose since $v$ now has a self loop.
If he reduces $u$ to $1$, he will also lose since $(u,v) \in E(G_1)$ and $w_1(u) = 1$ (according to case (3)).

Assume we reduced $u$ to a number $w_1(u) \geqslant 2$.
Thus $lu_{G'_1,w_1}(u)$ still equals $\PP$ since the only weight we modified is the one of $u$ and it has been decreased.
If $v \notin G'$, i.e., $v$ has a loop or $\exists t \in V(G_1)$ s.t. $(v,t) \in E(G_1)$ and $w_1(t) = 1$, then the second player wins according to cases (1) and (3).
If $v \in G'$ and $lu_{G',w}(v) = \N$, then $lu_{G'_1,w_1}(v)$ is still $\N$ since the only weight we modified is the one of a vertex labeled $\PP$. Consequently the resulting position makes the second player win by induction hypothesis.
If $v \in G'$ and $lu_{G',w}(v) = \PP$, then we necessarily have $w(v) = w(u)$ in $G'$. As $lu_{G'_1,w_1}(u) = \PP$ and $(u,v) \in E(G_1)$, then $lu_{G'_1,w_1}(v)$ becomes $\N$, implying that the second player wins by induction hypothesis.
Hence $(G,w,u)$ is $\N$ if and only if $lu_{G',w}(u) = \N$. Figure \ref{fig:case4} shows an example of the $lu$ labeling.\\

Concerning the complexity of the computation, note that all the cases except (4) can be executed in $O(|E(G)|)$ operations. Hence the computation of $lu_{G',w}(u)$ to solve case (4) becomes crucial. It is rather straightforward to see that in the worst case, the computation of $S$ and $T$ can be done in $O(|E(G)|)$ time. And the number of times where $S$ and $T$ are computed in the recursive definition of $lu$ is clearly bounded by $|V(G)|$. All of this leads to a global algorithm running in $O(|V(G)||E(G)|)$ time.
\begin{figure}[!ht]
\centering
	  \label{fig:case4}
		  \includegraphics[scale=.5]{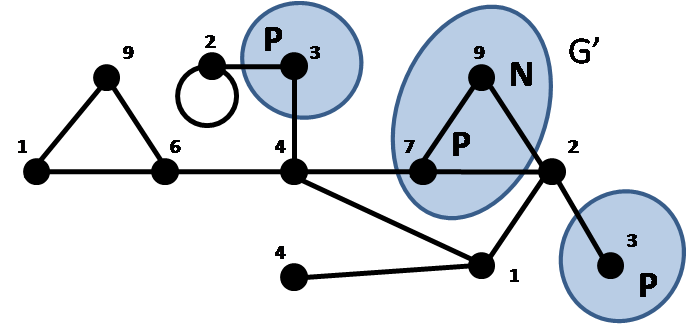}
	  \caption{Case 4: $lu$-labeling of the subgraph $G'$}
\end{figure}
\end{proof}

The technique described above can also be applied to Stockman's version of the game {\sc Vertex NimG}. In \cite{Sto}, an exptime algorithm is given to decide the outcome of a given position. We here show that the complexity can be decreased to $O(|V||E|)$. 
\begin{cor}
Let $(G,w,u)$ be an instance of {\sc Vertex NimG} with $w:V\rightarrow \mathbb{N}_{>0}$.
Deciding whether $(G,w,u)$ is $\PP$ or $\N$ can be done in $O(|V(G)||E(G)|)$ time.
\end{cor}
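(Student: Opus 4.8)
The plan is to re-use, essentially verbatim, the machinery of Theorem~\ref{thm:ug}. Recall that Stockman's {\sc Vertex NimG} (in the {\it remove then move} convention) differs from {\sc undirected vertexnim} in only two respects: a vertex whose weight reaches $0$ is \emph{not} deleted (so no clique is formed and no loop is added), and a player whose turn it is with the token on a $0$-weight vertex has lost. The first step is therefore to record the three elementary facts that will, in {\sc Vertex NimG}, play the role that the loops created upon deletion played in the proof of Theorem~\ref{thm:ug}: $(a)$ moving the token onto a vertex of current weight $0$ is a winning move, since the opponent is then immediately blocked; $(b)$ consequently, reducing the current vertex $u$ to $0$ and then moving the token to a neighbour $v$ with $w(v)\geqslant 1$ is a losing move, for the opponent merely decreases $w(v)$ and returns the token to $u$ along the edge just traversed (which still exists, as {\sc Vertex NimG} never deletes edges), leaving us blocked; and $(c)$ if $u$ carries a loop, reducing $w(u)$ to $0$ and staying on $u$ is winning, by $(a)$. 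In short, a $0$-weight vertex behaves in {\sc Vertex NimG} as a permanent trap that each player wants to force the opponent onto, and onto which no player ever willingly steps.

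Equipped with $(a)$--$(c)$, I would rerun the whole case analysis of Theorem~\ref{thm:ug} with $G_u$ and $G'$ defined exactly as there. The degenerate cases $G=G_u$, as well as Cases~$(1)$, $(3)$ and $(4)$, are argued as before, with each appeal to ``$v$ now carries a self-loop, hence moving there loses'' replaced by fact $(b)$, ``the opponent bounces the token back onto the freshly created $0$-vertex''; likewise, wherever the original proof reduces $w(u)$ to $1$ and stays on $u$ using a loop, the identical move is available here, by $(c)$. The single case that changes in substance is Case~$(2)$, $w(u)=1$: by $(a)$--$(c)$, the position is an $\N$ position if and only if $u$ has a loop or an adjacent vertex of weight $0$; in particular, for the instances we are asked to decide --- where all weights are positive --- it is an $\N$ position precisely when $u$ has a loop, with no parity condition involved, since the opponent finishes the game in two moves by returning the token to the $0$-vertex the mover on a weight-$1$ vertex is forced to create. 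The point to note is that this rewritten Case~$(2)$ is entered only at the top level and never inside the recursion of Case~$(4)$, because no vertex of $G'$ has a weight-$1$ neighbour; hence the change does not propagate, and the induction of Case~$(4)$ on $\sum_{v\in V(G')}w(v)$ --- with the labeling $lu_{G',w}$ defined word for word as in Theorem~\ref{thm:ug} --- goes through unchanged, the key observation being that no $0$-weight vertex is ever produced inside this recursion except by a move that is shown on the spot to be losing, so that every instance occurring in the induction still has positive weights.

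The complexity count is then identical to that of Theorem~\ref{thm:ug}: every case but~$(4)$ takes $O(|E(G)|)$, the sets $S$ and $T$ of the $lu$ labeling are each computed in $O(|E(G)|)$ and the peeling is performed at most $|V(G)|$ times, for a total of $O(|V(G)||E(G)|)$, which is the announced bound and improves the exponential-time procedure of \cite{Sto}. The step I expect to cost the most care is exactly this bookkeeping: one must verify, for \emph{each} invocation in the proof of Theorem~\ref{thm:ug} that appealed to a newly created loop, that the substitution to a persistent $0$-weight trap is legitimate --- that the trap is indeed adjacent to the vertex the token lands on (it is, by $(b)$), that the rewritten Case~$(2)$ and its all-weight-$1$ sub-case are correct, that looped vertices, when present, behave as in fact $(c)$, and that no $0$-vertex arising along a non-losing line of play spoils a later case. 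Given $(a)$--$(c)$ these checks are routine, and once they are done the corollary follows.
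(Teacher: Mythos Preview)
Your proposal follows essentially the same approach as the paper's own proof. The paper is more terse: it notes that ``the subgraph $G_u$ is no more useful'' and collapses everything to four flat cases --- $w(u)=1$ without loop is $\PP$; a loop on $u$ gives $\N$; $w(u)\geq 2$ with a weight-$1$ neighbour gives $\N$; otherwise compute $lu_{G',w}(u)$ --- whereas you retain $G_u$ and the full case structure of Theorem~\ref{thm:ug} and argue by systematically substituting your facts~$(a)$--$(c)$ for every appeal to a freshly created loop. Your rewritten Case~(2) (``$\N$ iff $u$ has a loop'') is exactly the paper's first two bullets, and your claim that Case~(4) goes through unchanged matches the paper's final bullet. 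One small wrinkle in your write-up: you say the degenerate case $G=G_u$ and Case~(3) are ``argued as before'', but both of those arguments in Theorem~\ref{thm:ug} rest on the parity count of Case~(2), which you have just replaced; the paper sidesteps this by stating the new Case~(3) outright rather than deriving it from the old one, and you should do the same rather than invoke a substitution that does not literally apply there.
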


\begin{proof}
The proof works similarly to the previous one, except that the subgraph $G_u$ is no more useful. Hence we have four cases:
\begin{itemize}
\item If $w(u)=1$ and $u$ has no self loop, then the position is $\PP$.
\item If $w(u)\geq 1$ and there is a loop on $u$, then it is $\N$.
\item If $w(u)\geq 2$ and there is a vertex $v$ such that $(u,v)\in E$ and $w(v)=1$, then it is an $\N$ position.
\item If $u\in G'$, then compute $lu_{G',w}(u)$ as in Theorem \ref{thm:ug}.
\end{itemize}

Note that the proof is still working if there exist vertices of null weight at the beginning. It suffices to consider the two following properties: if $w(u)=0$ then this is $\PP$, and if $u$ is adjacent to some $v$ with $w(v)=0$, then it is $\N$. 
\end{proof}

\section{Mis\`ere versions}

The mis\`ere version of a game is a game with the same rules except that the winning condition is reversed, i.e., the last player to move loses the game. The following results shows that in almost all cases, mis\`ere and normal versions of {\sc Vertexnim} have the same outcomes.

\begin{theo}
Let $(G,w,u)$ be an instance of {\sc undirected vertexnim} under the mis\`ere convention.
Deciding whether $(G,w,u)$ is $\PP$ or $\N$ can be done in $O(|V(G)||E(G)|)$ time.
\end{theo}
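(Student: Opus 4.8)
The plan is to mirror the proof of Theorem~\ref{thm:ug} almost verbatim, tracking at each step whether the single terminal position that distinguishes normal and mis\`ere play is ever actually reachable under optimal play. Recall that in the normal convention the game ends exactly when $G$ is empty and the player making the last move wins; under the mis\`ere convention that same player loses. The only place where the two conventions can disagree is when a player is \emph{forced} to make the final move, i.e.\ when the current vertex $u$ has $w(u)=1$ and, after deleting $u$, the graph becomes empty — in other words when $G$ consists of a single vertex of weight $1$. So first I would isolate this base case: if $(G,w,u)$ has $V(G)=\{u\}$ and $w(u)=1$, then under mis\`ere play this is an $\N$ position (the first player to move wins by being forced to empty the last vertex), whereas normally it is $\PP$. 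Everywhere else I claim the normal and mis\`ere outcomes coincide, and the bulk of the proof is checking this.

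Next I would re-run the case analysis of Theorem~\ref{thm:ug} (and of the two preliminary propositions) with the modified base case. The cases $w(u)\geqslant 2$ with a loop, $w(u)\geqslant 2$ with a weight-$1$ neighbour, and $G=G_u$ — these all assert $\N$ via moves that never leave a single weight-$1$ vertex when $|V(G)|$ is not already extremal, so they are unchanged; I only need to re-examine the arithmetic of the ``She loves me, she loves me not'' subgame. Concretely, if $G=G_u$ with $w(u)=1$ and $|V(G)|=n$, the normal outcome is $\N$ iff $n$ is odd; under mis\`ere it flips to $\N$ iff $n$ is even, \emph{but} one must be careful that when $n$ is odd the losing player cannot escape the all-ones component — since every vertex there has weight $1$, every move empties a vertex and the parity argument of Case~(2) goes through with the opposite target parity. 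The genuinely delicate point is Case~(2)/Case~(3) and the inductive Case~(4): in Case~(4) all vertices of $G'$ have weight $\geqslant 2$, so no move from within $G'$ can produce the singleton-weight-$1$ terminal position in one step, and whenever the game descends into a $G_u$-type or loop-type subposition it does so with $w$-values large enough that the mis\`ere correction only ever appears at the very bottom of the recursion. Thus the induction on $\sum_{v\in V(G')}w(v)$ and on the component size $n$ runs exactly as before, with the base cases ($n=1$, or $\sum w(v)=2$) being precisely where one substitutes the flipped outcome — and checking this substitution is consistent is where the real work lies.

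Having established the outcome dictionary (normal and mis\`ere agree except on the isolated weight-$1$ vertex, where they are opposite), the algorithmic claim is immediate: run the $O(|V(G)||E(G)|)$ procedure of Theorem~\ref{thm:ug}, but whenever the recursion bottoms out at a single weight-$1$ vertex, report the opposite label. This adds only $O(1)$ overhead per base case and does not change the asymptotic running time. I would phrase the final statement of the proof as: ``Hence the mis\`ere outcome of $(G,w,u)$ equals its normal outcome, except when $V(G)=\{u\}$ and $w(u)=1$, in which case it is reversed; applying Theorem~\ref{thm:ug} with this single modification yields the claimed $O(|V(G)||E(G)|)$ algorithm.''

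The main obstacle I anticipate is not the complexity bound but making the parity bookkeeping airtight in the inductive steps: one must verify that in every branch where the proof of Theorem~\ref{thm:ug} says ``this is a winning/losing move by induction hypothesis,'' the resulting position is never the exceptional singleton (or, if it is, that the claimed outcome already accounts for the flip). In particular the all-ones connected component in Case~(2) must be handled by a clean parity lemma stated directly for mis\`ere play, rather than by appeal to the normal-play version, since that is the one case where the discrepancy propagates up a whole chain of forced moves rather than sitting at a single leaf.
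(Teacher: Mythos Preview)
Your overall plan---show that normal and mis\`ere outcomes coincide except on a small exceptional set, then reuse Theorem~\ref{thm:ug}---is exactly the paper's approach. But two concrete errors make the proposal incorrect as written.

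First, your base case is reversed. If $V(G)=\{u\}$ and $w(u)=1$, the first player is forced to empty $u$ and thereby makes the last move; under the \emph{normal} convention that player wins (so the position is $\N$), and under the \emph{mis\`ere} convention that player loses (so the position is $\PP$). You state the opposite. Note that this already contradicts your own (correct) parity claim a few lines later: with $n=1$ vertex of weight~$1$, ``$\N$ iff $n$ even'' gives $\PP$ in mis\`ere, not $\N$.

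Second, and more seriously, your final summary---``mis\`ere outcome equals normal outcome except when $V(G)=\{u\}$ and $w(u)=1$''---is false. The exceptional set is \emph{all} positions in which every vertex has weight~$1$, not just the singleton. For instance, take two adjacent vertices both of weight~$1$: normal play gives $\PP$ (two forced moves, second player makes the last one), mis\`ere gives $\N$. Your stated algorithm (``run Theorem~\ref{thm:ug} and flip only at the singleton base case'') would output $\PP$ here and be wrong. The algorithm of Theorem~\ref{thm:ug} handles the all-ones case by a direct parity check, not by recursing down to a single vertex, so there is no ``propagation'' that fixes this for free.

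The clean formulation (which is what the paper does) is: if every vertex has weight~$1$, the game is the mis\`ere version of ``She loves me, she loves me not'' and is $\N$ iff $|V(G)|$ is even; otherwise $G\neq G_u$ at the top level and every branch of the case analysis in Theorem~\ref{thm:ug} eventually lands on a position that still contains a vertex of weight $\geqslant 2$, so the normal-play argument applies verbatim and the two conventions give the same outcome. The complexity bound then follows immediately.
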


\begin{proof}
If all vertices have weight $1$, then $(G,w,u)$ is an $\N$ position if and only if $|V(G)|$ is even since it reduces to the mis\`ere version of \textquotedblleft She loves move, she loves me not\textquotedblright.
Otherwise, we can use the same proof as the one of Theorem \ref{thm:ug} to see that $(G,w,u)$ is $\N$ in the mis\`ere version if and only if it is $\N$ in the normal version.
\end{proof}

\begin{theo}
Let $(G,w,u)$ be an instance of {\sc directed vertexnim} in the mis\`ere version, where $G$ is strongly connected, with a loop on each vertex.
Deciding whether (G,w,u) is $\PP$ or $\N$ can be done in time $O(|V(G)||E(G)|)$.
\end{theo}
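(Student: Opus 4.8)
The plan is to reduce the misère problem to the normal-play analysis of Theorem~\ref{thm:dg} by showing that the two conventions agree on every instance except the degenerate all-ones one. First I would dispose of that exceptional case: if $w(v)=1$ for every $v\in V(G)$, the game is literally the misère version of ``She loves me, she loves me not'' (every move is forced to zero out the current vertex and lasts exactly $|V(G)|$ turns), so $(G,w,u)$ is an $\N$ position if and only if $|V(G)|$ is even. From now on I assume some vertex has weight at least $2$, and I claim the proof of Theorem~\ref{thm:dg} goes through with only its base cases rephrased for the misère rule.

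The one genuinely convention-sensitive ingredient is the leaf statement ``$w(u)\geqslant 2$ implies $(G,w,u)$ is an $\N$ position'', which the normal-play proof uses as a terminal case. In the misère setting I would prove it by strategy stealing. If some move reducing $w(u)$ to $0$ is winning, play it. Otherwise reduce $w(u)$ to $1$ and remain on $u$ along its loop; the opponent is then forced to zero out $u$ and to step to some successor $v$, landing on a position $(G\backslash\{u\},w,v)$ (with arcs rerouted as in Definition~\ref{directed}) with the opponent to move next. Since no move reducing $u$ to $0$ was winning for us, each such $(G\backslash\{u\},w,v)$ is an $\N$ position; but being $\N$ is a property of the position, not of whose turn it is, so it is won by whoever moves there next — now us. (The only degenerate sub-case is $|V(G)|=1$, a single looped vertex of weight $\geqslant 2$: reducing it to $1$ and staying forces the opponent to make the final move.) Hence this lemma survives the misère twist.

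It remains to observe that the inductive core of Theorem~\ref{thm:dg} — the case $w(u)=1$, driven by the labelling $lo_{G'}$ of Definition~\ref{def:thmdir} and the induction on $|V(G')|$ — transfers without change. The key point is that every move examined in that part reduces a \emph{weight-$1$} vertex to $0$, so the set of vertices of weight $\geqslant 2$ is never disturbed; since we are not in the all-ones case this set stays nonempty throughout the recursion, no internal node of the induction is an all-ones position, and the misère/normal distinction simply does not surface there. The combinatorial facts invoked — that zeroing a vertex preserves the strongly connected components apart from losing that vertex, and the resulting behaviour of $lo$ on the sets $S_i$ and $T_i$ — are purely graph-theoretic and thus unchanged, while the base case $|V(G')|=1$ appeals to the misère leaf lemma above. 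Since we run exactly the algorithm of Theorem~\ref{thm:dg} (compute $G'$, then recurse to compute $lo$), the bound $O(|V(G)||E(G)|)$ is inherited. I expect the delicate part to be confirming that the ``except all-ones'' correspondence is internally consistent: an all-ones subposition can be reached only one move below the leaf lemma, and there its misère value must be exactly the parity count established at the outset; checking that these terminal evaluations line up, and that the leaf lemma's strategy-stealing remains valid even when it forces the opponent onto an all-ones position, is the crux of the argument.
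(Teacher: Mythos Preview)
Your approach is exactly the paper's: handle the all-ones instance by the flipped parity, then invoke the proof of Theorem~\ref{thm:dg} verbatim for everything else --- you simply flesh out what the paper dismisses in one sentence. One small slip: in your strategy-stealing paragraph you write that after the opponent zeroes $u$ and steps to $v$ it is ``the opponent to move next'', whereas it is of course \emph{you} to move (as your own closing clause ``now us'' correctly has it); fix that wording and the argument is clean.
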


\begin{proof}
If all vertices have weight $1$, then $(G,w,u)$ is an $\N$ position if and only if $|V(G)|$ is even since it reduces to the mis\`ere version of \textquotedblleft She loves move, she loves me not\textquotedblright.
Otherwise, we can use the same proof as the one of Theorem \ref{thm:dg} to see that $(G,w,u)$ is $\N$ in the mis\`ere version if and only if it is $\N$ in the normal version.
\end{proof}

\begin{rem}
Though the algorithms we give for both {\sc Undirected Vertexnim} and {\sc Directed Vertexnim} can easily be adapted for the mis\`ere version, it does not seem to be the case with the algorithm we give for {\sc Vertex NimG}.
\end{rem}

\begin{prob}
This section showed that {\sc Vertex NimG} and {\sc Undirected Vertexnim} can both be solved in polynomial time. Does this remain true when considering the {\it Move then remove} convention ?
\end{prob}

\section*{Conclusion}

When dealing with an undirected graph, we proved that both versions of {\sc Vertex NimG} (Stockman's version where the game ends whenever a player is blocked on a $0$, and our version which allows to play until all the weight is removed from the graph) are tractable. We even proved that deciding whether a given position is $\PP$ or $\N$ can be done in quadratic time, which is a real improvement compared to the exptime algorithm presented in \cite{Sto}.
Unfortunately, the directed case turns out to be more tricky, even for simple graphs such as circuits. Yet, it seems that our variant of Nim on graphs is more accessible than {\sc Vertex NimG} or {\sc Geography}, as the results obtained in Theorem \ref{thm:dg} allow us to be optimistic for graphs where loops become optional.

\end{document}